\newtheorem{thm}{Theorem}
\newtheorem{lem}{Lemma}
\newtheorem{defi}{Definition}
\begin{document}

\title{Secure Hybrid Digital-Analog Coding With Side Information at the Receiver}
\author{Ghadamali Bagherikaram, Konstantinos N. Plataniotis\\
The Edward S. Rogers Sr. Department of ECE, University of Toronto,\\
 10 King's College Road, Toronto, Ontario, Canada M5S 3G4\\
 Emails: \{gbagheri, kostas\}@comm.utoronto.ca
}
 \maketitle

\begin{abstract}
In this work, the problem of transmitting an i.i.d Gaussian source over an i.i.d Gaussian wiretap channel with an i.i.d Gaussian side information available at the intended receiver is considered. The intended receiver is assumed to have a certain minimum SNR and the eavesdropper is assumed to have a strictly lower SNR, compared to the intended receiver.  The objective is to minimize the distortion of source reconstruction at the intended receiver. In this work, it is shown that the source-channel separation coding scheme is optimum in the sense of achieving minimum distortion. Two hybrid digital-analog Wyner-Ziv coding schemes are then proposed which achieve the minimum distortion. These secure joint source-channel coding schemes are based on the Wyner-Ziv coding scheme and wiretap channel coding scheme when the analog source is not explicitly quantized. The proposed secure hybrid digital-analog schemes are analyzed under the main channel SNR mismatch. It is proven that the proposed schemes can give a graceful degradation of distortion with SNR under SNR mismatch, i.e., when the actual SNR is larger than the designed SNR.
\end{abstract}
\begin{keywords}
MMSE Distortion, Hybrid Digital-Analog Coding, Secure Wireless Communication, Wiretap Channel, SNR Mismatch, Threshold Effect.
\end{keywords}

\section{Introduction}

The notion of information
theoretic secrecy in communication systems was first introduced in \cite{1}.
The information theoretic secrecy requires that the received signal
by an eavesdropper not provide any information about the transmitted messages.
Following the pioneering works of \cite{2} and \cite{3} which have studied the wiretap
channel, many extensions of the wiretap channel model have been considered from a
perfect secrecy point of view (see e.g., \cite{4,5,6,7,8}). Particularly, in \cite{9,10}, the Gaussian wiretap channel of \cite{11} is extended to the Gaussian wiretap channel with side information available at the transmitter.

All extensions of the wiretap channel model have considered communicating a \emph{discrete} source with perfect secrecy constraint. In many applications, however, a bandlimited \emph{analog} source needs to be transmitted on a bandlimited Gaussian wiretap channel with side information available at the receiver. In many situations, the exact signal-to-noise ratio (SNR) of the main channel may not be known at the transmitter. Usually, a range of the main channel SNR is known but the true SNR value is unknown. Given a range of main channel SNR, such that the eavesdropper's signal is degraded with respect to the  legitimate receiver's signal, it is desirable to design a single transmitter which has a robust performance for all ranges of SNRs. A common method of designing such a system is based on Shannon's source-channel separation coding: Quantize the analog source and then transmit the resulting discrete source by the digital secret wiretap channel coding scheme. The main advantage of a digital system is that it is more reliable and cost efficient.

The inherent problem of digital systems is that they suffer from a severe form of "threshold effect" \cite{12,13}. This effect may be briefly described as follows: The system achieves a certain performance at a certain designed SNR. When the  SNR is increased, however, the system performance does not improve and it degrades drastically when the true SNR falls below the designed SNR. The severity of the threshold effect in digital systems is related to Shannon's source-channel separation principle \cite{14}. Recent works on non-secure communication systems, however, have proven that joint source-channel coding schemes can not only outperform the digital systems for a fixed complexity and delay, they are also more robust against the SNR variations. \cite{15,16,17,18,19,19_1}.

Some other schemes that exploit the advantage of analog systems are studied in \cite{19_2,19_3,19_4,19_5} and \cite{19_6}. These works are based on the so-called direct source-channel mapping technique which may briefly be explained as follows: the output of a source scalar/vector quantizer is mapped directly to a channel symbol using analog (or nearly analog) modulation, i.e., amplitude modulation or M-ary quadrature amplitude modulation (QAM). The direct source-channel codes have graceful degradation performance at low SNRs. In \cite{19_6}, a robust image coding system is presented which combines subband coding and QAM. This system allows various compression levels based on block-wise classification. An improved image coding system is then proposed in \cite{19_3} which utilizes both bandwidth compression and bandwidth expansion mappings, where the bandwidth expansion mapping employs a quantization error.

In \cite{17}, several hybrid digital-analog joint source-channel coding scheme are proposed for transmitting a Gaussian source over a (non-secure) Gaussian channel (without side information). The main idea in \cite{17} for increasing robustness is to reduce the number of quantization intervals, and thereby increase the distance between the decision lines of the quantization levels. This will increase the distortion, but in order to compensate the coarser representation, the quantization error is sent as an analog symbol using a linear coder (see also \cite{20}). In \cite{21}, different coding schemes are analyzed for transmitting a Gaussian source over a Gaussian wiretap channel (without side information). For a fixed information leakage rate to the eavesdropper, \cite{21} has shown that superimposing the secure digital signal with the analog (quantization error) part has better performance compared to the separation based scheme and the uncoded scheme. In \cite{22}, the problem of transmitting a Gaussian source over a (non-secure) Gaussian channel with side information (either at transmitter or at receiver) is studied. \cite{22} has introduced several hybrid digital-analog forms of the Costa and Wyner-Ziv coding (\cite{23}) schemes. In \cite{22}, the results of \cite{24} are extended to the case in which the transmitter or receiver has side information, and have shown that there are infinitely many schemes for achieving the optimal distortion. In the work of \cite{25}, we considered the problem of transmitting a Gaussian source over a \emph{secure} Gaussian channel with side information available at the transmitter and proposed different secure joint source channel coding schemes based on the secret dirty paper coding and wiretap channel coding scheme.

In this paper, we consider the problem of transmitting an i.i.d Gaussian source over an i.i.d Gaussian wiretap channel with side information available at the intended receiver. We assume that the intended receiver has a certain minimum SNR and the eavesdropper has a strictly lower SNR compared to the intended receiver. We are interested in minimizing the distortion of source reconstruction at the intended receiver.  We show that, here, like the Gaussian wiretap channel without side information, Shannon's source-channel separation coding scheme is optimum in the sense of achieving the minimum distortion. We then propose two hybrid digital-analog secure joint source-channel coding schemes which achieve the minimum distortion. Our coding schemes are based on the Wyner Ziv coding scheme and wiretap channel coding scheme when the analog source is not explicitly quantized. We will illustrate that these schemes achieve the optimum distortion. We analyze our secure hybrid digital-analog schemes under the main channel SNR mismatch. We will show that our proposed schemes can give a graceful degradation of distortion with SNR under SNR mismatch, i.e., when the actual SNR is larger than the designed SNR.
\section{Preliminaries And Related Works}

\subsection{Notation}

In this paper, random variables are denoted by capital letters (e.g.
$X$) and their realizations are denoted by corresponding lower case
letters (e.g. $x$). The finite alphabet of a random variable is
denoted by a script letter (e.g. $\mathcal{X}$) and its probability
distribution is denoted by $P(x)$. Similarly, the function $P(x,y)$ represents the joint probability distribution function of the random variables $X$ and $Y$. The vectors will be written as
$x^{n}=(x_{1},x_{2},...,x_{n})$, where subscripted letters denote
the components and superscripted letters denote the vector. The notation
$x_{i}^{j}$ denotes the vector $(x_{i},x_{i+1},...,x_{j})$ for $j\geq i$. A Gaussian
Random variable $X$ with a mean of $\mu$ and variance of $\sigma^{2}$ is denoted by $X\sim \mathcal{N}(\mu,\sigma^2)$. The function $E[.]$ represents a statistical expectation. The function $I(X;Y)$ represents mutual information between random variables $X$ and $Y$ and $A^{*(n)}_{\epsilon}$ denotes the set of strongly jointly typical sequences.

\subsection{System Model And Problem Statement}

\underline{Source Model}: Consider a memoryless Gaussian source of $\{V_{i}\}_{i=1}^{\infty}$ with zero mean and variance $\sigma^{2}_{v}$. Thus, $V_{i}\sim \mathcal{N}(0,\sigma^{2}_{v})$ and we assume that the sequence $\{V_{i}\}$ is independent and identically distributed (i.i.d). We assume that the source is obtained from uniform sampling of a continuous-time Gaussian process with bandwidth $W_{s}(Hz)$. Furthermore, we assume that the sampling rate is $2W_{s}$ samples per second.

\underline{Channel Model}: The source is transmitted over an Additive White Gaussian Noise (AWGN) wiretap channel when the intended receiver has some side information about the source. The channel is therefore modeled as follows:
\begin{IEEEeqnarray}{rl}
Y_{i}&=X_{i}+W_{i},\\ \nonumber
Z_{i}&=X_{i}+W_{i}^{'},\\ \nonumber
V_{i}&=V_{i}^{'}+T_{i},
\end{IEEEeqnarray}
where $X_{i}$, $Y_{i}$, and $Z_{i}$ are the channel input, the received signal by the intended receiver and the received signal by the eavesdropper, respectively. We assume that $E[X_{i}^{2}]\leq P$, and $W_{i}\sim\mathcal{N}(0,N_{1})$, $W_{i}^{'}\sim\mathcal{N}(0,N_{2})$, where $N_{2}>N_{1}$. Furthermore, assume that $T_{i}$'s are a sequence of real i.i.d Gaussian random variables with zero mean and variance $\sigma_{t}^{2}$, i.e. $T_{i}\sim \mathcal{N}(0,\sigma_{t}^{2})$. Here $V_{i}^{'}$ and $T_{i}$ are mutually independent Gaussian random variables. As the source, side information and the channel are i.i.d over the time, we will omit the index $i$ throughout the rest of the paper. The channel is derived from a continuous-time AWGN wiretap channel with bandwidth $W_{c}(Hz)$. The equivalent discrete-time channel is used at a rate of $2W_{c}$ channel uses per second. The block diagram of the system is depicted in Fig.1.

\begin{figure}
\leftline{\includegraphics[scale=.22]{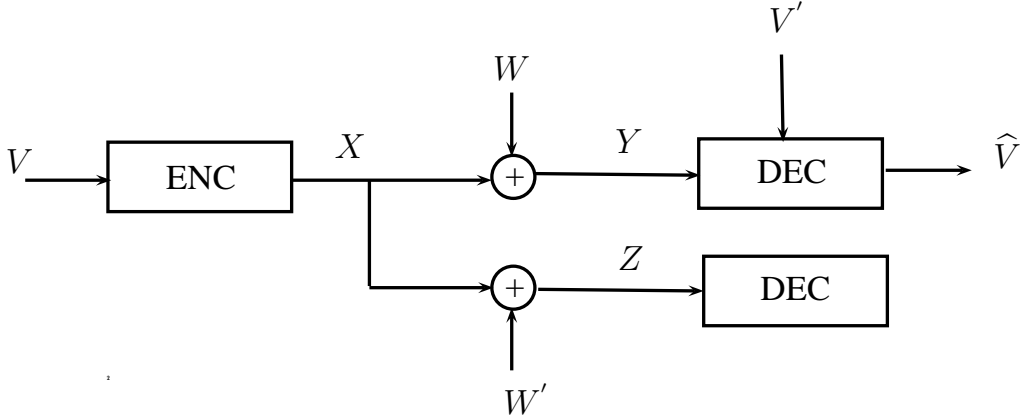}} \caption{Block diagram of the secure joint source-channel coding problem with side information known at the receiver.}
\end{figure}

\underline{Coding Scheme}: The source samples are grouped into blocks of size $m$
\begin{equation}\nonumber
V^{m}=(V_{1},V_{2},...,V_{m}),
\end{equation}
and the encoder is a mapping $f_{m}:\mathbb{R}^{m}\rightarrow \mathbb{R}^{n}$ which satisfies the power constraint $E[\|f_{m}(V^{m})\|^{2}]\leq nP$. Let us define the parameter $\rho=n/m=W_{c}/W_{s}$. In this paper we assume that $\rho=1$, i.e., $m=n$. The received signals by the intended receiver and the eavesdropper are given by
\begin{IEEEeqnarray}{rl}
Y^{n}=X^{n}+W^{n},\\ \nonumber
Z^{n}=X^{n}+W^{'n},
\end{IEEEeqnarray}
where $X^{n}=f_{n}(V^{n})$, $W^{n}\sim \mathcal{N}(0,N_{1}\mathbf{I_{n}})$,  $W^{'n}\sim \mathcal{N}(0,N_{2}\mathbf{I_{n}})$, and $\mathbf{I_{n}}$ is the $n\times n$ identity matrix. The decoder at the intended receiver is a mapping $g_{n}:\mathbb{R}^{n}\rightarrow \mathbb{R}^{n}$. The average squared-error distortion of the coding scheme at the intended receiver is given by
\begin{equation}
\bar{D}_{n}(f_{n},g_{n},N_{1},N_{2})=\frac{1}{n}E[\|V^{n}-\hat{V}^{n}\|^{2}],
\end{equation}
where $\hat{V}^{n}=g_{n}(Y^{n})$. For the purpose of analysis, we will consider sequences of codes $(f_{n},g_{n})$, where $n$ is increasing. The asymptotic performance of the code is given by
\begin{equation}
\bar{D}(N_{1},N_{2})=\lim_{n\rightarrow \infty}\bar{D}_{n}(f_{n},g_{n},N_{1},N_{2}).
\end{equation}
Note that the above $\bar{D}$ is also a function of $\sigma^{2}_{v}>0$, $P>0$, $\sigma_{t}^{2}>0$, and $\rho>0$, but we assume that these parameters are known and fixed, and therefore express $\bar{D}$ as a function of $(N_{1},N_{2})$. In subsequent sections, we refer to $\bar{D}$ as mean-squared distortion and omit the bar superscript and denote it by $D$, i.e., $D=\bar{D}$.

\underline{Secrecy Requirements}: The eavesdropper obtains $Z^{n}$. The secrecy of the system is measured by the information leaked to the eavesdropper and is expressed as follows
\begin{IEEEeqnarray}{rl}\label{Ie}
I_{\epsilon}=\frac{1}{n}I(V^{n};Z^{n}).
\end{IEEEeqnarray}
Note that $I_{\epsilon}=0$ corresponds to the perfect secrecy condition and implies that the eavesdropper obtains no information about the source. In this paper we consider the situation in which the leakage information $I_{\epsilon}$ is known and is a fixed constant.

\underline{Distortion Exponent}: In practical scenarios, the transmitter usually does not have an exact knowledge of the actual noise power at the intended receiver and the eavesdropper. The transmitter, however, knows an upperbound for the noise power of the intended receiver and also knows a lowerbound for the noise power of the eavesdropper. Let us denote the upperbound noise power of the intended receiver and the lowerbound noise power of the eavesdropper by  $N_{1}$ and $N_{2}$, respectively. Therefore, the transmitter designs its coding scheme based on the noise power of  $N_{1}$ and $N_{2}$ such that $N_{1}\geq N_{1a}$ and $N_{2}\leq N_{2a}$, where $N_{1a}$ and $N_{2a}$ are the actual noise variances corresponding to the actual $SNR_{1a}=\frac{P}{N_{1a}}$ and $SNR_{2a}=\frac{P}{N_{2a}}$, respectively. The eavesdropper channel is still a degraded version of the main channel and is assumed to have the lowest $SNR_{2a}<SNR_{2}<SNR_{1}<SNR_{1a}$, where $SNR_{2}=\frac{P}{N_{2}}$ and $SNR_{1}=\frac{P}{N_{1}}$. The intended receiver is assumed to have a perfect estimate of $SNR_{1a}$, but the transmitter communicates at a lower designed $SNR_{1}$. In this scenario, we expect a graceful degradation of distortion $D(SNR_{1a}, SNR_{2})$ with $SNR_{1a}$ compared with $D(SNR_{1}, SNR_{2})$ when the actual $SNR_{1a}>SNR_{1}$. Let us define the distortion exponent as follows:

\begin{defi}
For a fixed $SNR_{2}$, the distortion exponent of $D(SNR_{1a},SNR_{2})$ is given by
\begin{equation}\label{eqde}
\zeta\stackrel{\triangle}{=}-\lim_{SNR_{1a}\rightarrow \infty}\frac{\log D(SNR_{1a},SNR_{2})}{\log SNR_{1a}}.
\end{equation}
\end{defi}

The highest possible distortion exponent is $\rho$ and therefore, $0\leq \zeta\leq \rho$. The distortion exponent can be used as a criterion for the robustness of a coding scheme. A high distortion exponent means that the coding scheme is more robust against the case of $SNR$ mismatch where we design the scheme to be optimal for a channel noise variance of $N_{1}$, but the actual noise variance is $N_{1a}$. In this paper, we propose two robust coding schemes which achieve the optimum mean-squared distortion and analyze them for $SNR$ mismatch. Before introducing our proposed schemes, we need to review some related works in this area.

\subsection{Related Works}

\subsubsection{Digital Wiretap Channel}

In a digital wiretap channel (without any interference $S$), a digital message $M\in\{1,2,...,nC_{s}\}$ is transmitted to the intended receiver while the eavesdropper is kept ignorant. Wyner in \cite{2} characterized the secrecy capacity of this channel when the eavesdropper's channel is degraded with respect to the main channel. Csiszar et. al. in \cite{3} considered the general wiretap channel and established its secrecy capacity. Let us assume $X$, $Y$ and $Z$ to be the channel input, intended receiver's signal and eavesdropper's signal, respectively. The secrecy capacity of a wiretap channel is given by
\begin{equation}
C_{s}=\max_{P(u,x)}2W_{c}\left[I(U;Y)-I(U;Z)\right],
\end{equation}
where $U\rightarrow X\rightarrow YZ$ forms a Markov chain. When the channels are AWGN, \cite{11} has shown that the secrecy capacity is given by
\begin{equation}
C_{s}=W_{c}\left[\log(1+\frac{P}{N_{1}})-\log(1+\frac{P}{N_{2}})\right].
\end{equation}

Here, we briefly explain the coding scheme. We Generate $2^{nI(U;Y)}$ Gaussian codewords $U^{n}$ and throw them uniformly at random into $2^{nC_{s}}$ bins. Each bin thus contains $2^{nI(U;Z)}$ codeword $U^{n}$. To encode the message $M\in\{1,2,...,2^{C_{s}}\}$  randomly choose a $U^{n}$ from the bin which is indicated by $M$ and send it. The intended receiver seeks for a $U^{n}$ which is jointly typical with $Y^{n}$ and declares the bin index as the transmitted message. The probability of error asymptotically tends to be zero, i.e., $\lim_{n\rightarrow \infty}P_{e}(\hat{M}\neq M)\rightarrow 0$. The information leakage is $\lim_{n\rightarrow \infty}\frac{1}{n}I(M;Z^{n})=0$. When $I_{\epsilon}$ is not zero but is a fixed known constant, we have the following Lemma:

\begin{lem}
In a digital wiretap channel, the information leakage rate to the eavesdropper is $I_{\epsilon}$ for the rate of
\begin{IEEEeqnarray}{rl}\label{RIe}
R_{I_{\epsilon}}&=C_{s}+I_{\epsilon}\\
\nonumber &=W_{c}\left[\log(1+\frac{P}{N_{1}})-\log(1+\frac{P}{N_{2}})\right]+I_{\epsilon}.
\end{IEEEeqnarray}
\end{lem}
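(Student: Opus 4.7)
The plan is to modify Wyner's double-binning construction so that each bin contains fewer codewords than in the zero-leakage case, and to show that this reduction produces exactly $I_{\epsilon}$ bits per channel use of leakage while enlarging the message set to rate $R_{I_{\epsilon}}=C_{s}+I_{\epsilon}$. First I would fix the capacity-achieving Gaussian input $U=X\sim\mathcal{N}(0,P)$, for which $I(U;Y)-I(U;Z)=\tfrac{1}{2}\log(1+P/N_{1})-\tfrac{1}{2}\log(1+P/N_{2})$ per channel use; after scaling by $2W_{c}$ channel uses per second this yields $C_{s}$ exactly as in the statement. I would then generate $2^{nI(U;Y)}$ i.i.d.\ Gaussian codewords and partition them uniformly at random into $2^{nR_{I_{\epsilon}}}$ bins of size $2^{n[I(U;Z)-I_{\epsilon}]}$; the identity $R_{I_{\epsilon}}+I(U;Z)-I_{\epsilon}=I(U;Y)$ keeps the counting consistent. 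To transmit $M\in\{1,\ldots,2^{nR_{I_{\epsilon}}}\}$ the encoder draws a codeword uniformly at random from bin $M$ and sends it.

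For reliability, I would apply joint-typicality decoding at the legitimate receiver; since the total codebook size is $2^{nI(U;Y)}$, the standard random-coding argument for the AWGN channel gives $P_{e}\to 0$. For the leakage, I would use the decomposition
\[ H(M\mid Z^{n})=H(M,U^{n}\mid Z^{n})-H(U^{n}\mid M,Z^{n})=H(U^{n}\mid Z^{n})-H(U^{n}\mid M,Z^{n}), \]
where the second equality uses that $M$ is a deterministic function of $U^{n}$. A standard random-coding calculation on the full codebook yields $\tfrac{1}{n}H(U^{n}\mid Z^{n})\to I(U;Y)-I(U;Z)=C_{s}$. Because each sub-codebook has rate $I(U;Z)-I_{\epsilon}$, strictly below the eavesdropper's AWGN capacity $I(U;Z)$, she can joint-typicality decode $U^{n}$ within the revealed bin with vanishing error, so Fano's inequality gives $\tfrac{1}{n}H(U^{n}\mid M,Z^{n})\to 0$. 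Combining these, $\tfrac{1}{n}H(M\mid Z^{n})\to C_{s}$, and since $\tfrac{1}{n}H(M)=R_{I_{\epsilon}}=C_{s}+I_{\epsilon}$, we obtain $\tfrac{1}{n}I(M;Z^{n})\to I_{\epsilon}$ as claimed.

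The main obstacle I anticipate is establishing $\tfrac{1}{n}H(U^{n}\mid M,Z^{n})\to 0$: although the sub-codebook has the right rate, this conclusion requires that, after averaging over the random bin assignment, each sub-codebook is with high probability a good channel code for the eavesdropper's AWGN channel. This demands the standard but delicate indicator-function argument from Wyner's original equivocation analysis, combined with Fano's inequality. The remaining ingredients---the channel-coding theorem for the legitimate receiver and the entropy decomposition---are routine.
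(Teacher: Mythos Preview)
Your argument is correct and gives a direct \emph{achievability} proof: you construct a Wyner-type binning scheme with $2^{nR_{I_{\epsilon}}}$ bins, each of size $2^{n[I(U;Z)-I_{\epsilon}]}$, verify reliable decoding at the legitimate receiver, and compute the equivocation $H(M\mid Z^{n})$ via the decomposition $H(U^{n}\mid Z^{n})-H(U^{n}\mid M,Z^{n})$ together with Fano's inequality on the sub-codebooks. This yields $\tfrac{1}{n}I(M;Z^{n})\to I_{\epsilon}$ directly. The paper, by contrast, argues the \emph{converse} direction: it takes an arbitrary code with leakage $I_{\epsilon}$, invokes an equivocation upper bound $\tfrac{1}{n}H(M\mid Z^{n})\leq C_{s}$ (attributed to Theorem~1 of \cite{13}), and from $\tfrac{1}{n}H(M)=R_{I_{\epsilon}}$ together with $\tfrac{1}{n}I(M;Z^{n})=I_{\epsilon}$ deduces $R_{I_{\epsilon}}\leq C_{s}+I_{\epsilon}$; achievability is then merely asserted in the final line. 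So the two proofs are complementary rather than parallel: yours supplies the constructive half that the paper leaves implicit, while the paper's argument supplies the outer bound showing that no higher rate is possible at leakage $I_{\epsilon}$. Your route has the advantage of being self-contained and explicit about the coding scheme actually used later in the paper; the paper's route is shorter because it outsources the heavy lifting to an external reference, but it does not by itself exhibit a scheme meeting the bound. The technical obstacle you flag---that each randomly assigned sub-codebook is, with high probability, a good code for the eavesdropper's channel---is exactly the point Wyner's original equivocation analysis handles, and your plan to invoke it is appropriate.
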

\begin{proof}
Let $M$ be the secret message. The transmission rate is then given by
\begin{IEEEeqnarray}{rl}\label{seq1}
R_{I_{\epsilon}}=\frac{H(M)}{n}.
\end{IEEEeqnarray}
According to the security analysis of \cite{13}, by applying Theorem 1 of \cite{13} to our problem setup we obtain the following inequality,
\begin{IEEEeqnarray}{rl}\label{seq2}
\frac{H(M|Z^{n})}{n}\leq W_{c}\left[\log(1+\frac{P}{N_{1}})-\log(1+\frac{P}{N_{2}})\right].
\end{IEEEeqnarray}
By combining (\ref{seq1}) and (\ref{seq2}), we have
\begin{IEEEeqnarray}{rl}
R_{I_{\epsilon}}\frac{H(M|Z^{n})}{H(M)}\leq W_{c}\left[\log(1+\frac{P}{N_{1}})-\log(1+\frac{P}{N_{2}})\right].
\end{IEEEeqnarray}
The left side of the above inequality can be simplified as follows:
\begin{IEEEeqnarray}{rl}
R_{I_{\epsilon}}\frac{H(M|Z^{n})}{H(M)}&=R_{I_{\epsilon}}\frac{H(M|Z^{n})-H(M)+H(M)}{H(M)}\\ \nonumber &\stackrel{(a)}{=}R_{I_{\epsilon}}\frac{H(M)-nI_{\epsilon}}{H(M)}\\ \nonumber &\stackrel{(b)}{=}R_{I_{\epsilon}}-I_{\epsilon},
\end{IEEEeqnarray}
where $(a)$ follows from the definition of $I_{\epsilon}$ in (\ref{Ie}), and $(b)$ follows from (\ref{seq1}). Thus, we can bound $R_{I_{\epsilon}}$ as
\begin{IEEEeqnarray}{rl}
R_{I_{\epsilon}}\leq W_{c}\left[\log(1+\frac{P}{N_{1}})-\log(1+\frac{P}{N_{2}})\right]+I_{\epsilon}.
\end{IEEEeqnarray}
Therefore, the maximum rate for $R_{I_{\epsilon}}$ can be obtained by choosing
\begin{IEEEeqnarray}{rl}
R_{I_{\epsilon}}=W_{c}\left[\log(1+\frac{P}{N_{1}})-\log(1+\frac{P}{N_{2}})\right]+I_{\epsilon}.
\end{IEEEeqnarray}

\end{proof}

 \subsubsection{Digital Wyner-Ziv Coding}

Consider a (non-secure) source coding problem with side information known at the receiver. The rate distortion function with side information $R_{V^{'}} (D)$
is defined as the minimum rate required to achieve distortion $D$ if the
side information $V^{'}$ is available to the decoder. Precisely, $R_{V^{'}} (D)$ is the
infimum of rates $R$ such that there exist maps in $f_{n}: \mathcal{V}^{n}\rightarrow \{1,... , 2^{nR}\},$ and $ g_{n} : \mathcal{V}^{'n} \times \{1, . . . , 2^{nR}\}\rightarrow \mathcal{V}^{n}$ such that
\begin{IEEEeqnarray}{rl}
\lim_{n\rightarrow\infty}\sup Ed(V^{n}, g_{n}(V^{'n},f_{n}(V^{n})))\leq D.
\end{IEEEeqnarray}
Wyner-Ziv coding scheme achieved the entire curve $R_{V^{'}}(D)$ as in the
following theorem.

\begin{thm}\cite{2} $\left[\hbox{Rate distortion with side information}\right]$
Let $(V,V^{'})$ be drawn i.i.d. according to the joint distribution $p(v,v^{'})$ and let $d(v^{n},\hat{v}^{n})=\frac{1}{n}\sum_{i=1}^{n} d(v_{i} ,\hat{v}_{i})$ be given. The rate distortion function with side information is given by
\begin{IEEEeqnarray}{rl}
R_{V^{'}} (D) =\min_{p(u|v)}\min_{g}\left\{I(U;V) - I (U;V^{'})\right\}
\end{IEEEeqnarray}
where the minimization is over all functions $g : \mathcal{V}^{'} \times \mathcal{U} \rightarrow \hat{\mathcal{V}}$ and conditional
probability mass functions $p(u|v), |\mathcal{U}|\leq |\mathcal{V}| + 1$, such that
\begin{IEEEeqnarray}{rl}
\sum_{v}\sum_{u}\sum_{v^{'}}p(v, v^{'})p(u|v)d(v, g (v^{'},u))\leq D.
\end{IEEEeqnarray}
\end{thm}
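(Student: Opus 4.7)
The plan is to establish this classical Wyner--Ziv characterization by the standard achievability--converse argument, using random binning for the forward part and an auxiliary-random-variable identification for the converse.

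For achievability, I would fix any $p(u|v)$ and reconstruction map $g$ achieving $E[d(V,g(V',U))]\leq D$, and construct a random code. Generate $2^{n(I(U;V)+\epsilon)}$ codewords $U^{n}(w)$ i.i.d.\ from the induced marginal $p(u)$, and distribute them uniformly at random among $2^{n(I(U;V)-I(U;V')+2\epsilon)}$ bins. Upon observing $V^{n}$, the encoder searches for some $w$ with $(U^{n}(w),V^{n})\in A^{*(n)}_{\epsilon}$ and transmits the bin index $b$ containing $w$; this covering step succeeds with high probability because the codebook size exceeds $2^{nI(U;V)}$. The decoder, holding $V^{'n}$ and $b$, declares the unique $\hat{w}$ in bin $b$ such that $(U^{n}(\hat{w}),V^{'n})\in A^{*(n)}_{\epsilon}$, which is correct with high probability because each bin contains roughly $2^{nI(U;V')}$ codewords (the packing step). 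The per-letter reconstruction $\hat{V}_{i}=g(V_{i}^{'},U_{i}(\hat{w}))$, combined with the joint typicality of $(U^{n},V^{n},V^{'n})$ that follows from the Markov chain $U\to V\to V'$, yields asymptotic per-letter distortion at most $D$.

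For the converse, I would take any code $(f_{n},g_{n})$ of rate $R$ with $\limsup \frac{1}{n}E[d(V^{n},\hat{V}^{n})]\leq D$ and write $T=f_{n}(V^{n})$. The starting inequality is $nR\geq H(T)\geq I(T;V^{n}\mid V^{'n})=\sum_{i=1}^{n}I(T;V_{i}\mid V^{'n},V^{i-1})$. Defining the auxiliary variable $U_{i}$ to be $T$ together with all of $V^{'n}$ other than $V_{i}^{'}$, I would verify the Markov chain $U_{i}\to V_{i}\to V_{i}^{'}$ using the i.i.d.\ nature of the pairs $(V_{j},V_{j}^{'})$, and then establish $I(T;V_{i}\mid V^{'n},V^{i-1})\geq I(U_{i};V_{i})-I(U_{i};V_{i}^{'})$ by expanding both sides with the chain rule and canceling common terms. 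Introducing a time-sharing variable $Q$ uniform on $\{1,\ldots,n\}$ and setting $U=(U_{Q},Q)$, $V=V_{Q}$, $V'=V_{Q}^{'}$ single-letterizes the bound to $R\geq I(U;V)-I(U;V')$. The reconstruction function $g$ is extracted from the per-letter decoder outputs at time $Q$, and the distortion constraint is inherited in its single-letter form.

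The main obstacle is the converse, specifically the identification of $U_{i}$ and the verification of the Markov chain $U_{i}\to V_{i}\to V_{i}^{'}$ together with the mutual-information inequality above. This requires careful bookkeeping based on the fact that $V_{i}^{'}$ is conditionally independent of the remaining $V$ and $V'$ samples and of $T$ given $V_{i}$, as a consequence of the memoryless joint source/side-information statistics. Once the single-letter lower bound is in place, the cardinality bound $|\mathcal{U}|\leq |\mathcal{V}|+1$ follows from a Carath\'eodory-type convex-cover argument applied to the set of achievable $(I(U;V)-I(U;V'),E[d(V,g(V',U))])$ pairs, completing the proof.
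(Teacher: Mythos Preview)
Your achievability argument is essentially the same random-coding-plus-binning sketch that the paper gives immediately after the theorem statement (generate $2^{n(I(U;V)+\epsilon)}$ codewords, bin them into $2^{n(I(U;V)-I(U;V')+O(\epsilon))}$ bins, encode by covering, decode by packing against $V^{'n}$), so on that half you match the paper exactly.

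The difference is that the paper does not prove the theorem at all: it is quoted as a known result from \cite{2} and only the achievability scheme is reviewed as background, with no converse and no cardinality argument. Your proposal goes further by supplying the standard single-letter converse (identify $U_{i}=(T,V^{'n\setminus i})$, verify $U_i\to V_i\to V_i'$, time-share) and the Carath\'eodory bound on $|\mathcal{U}|$. That additional material is correct and is the textbook route, but there is nothing in the paper to compare it to; the paper simply takes the Wyner--Ziv formula as given.
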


The function $g$ in the theorem corresponds to the decoding map that
maps the encoded version of the $V$ symbols and the side information $V^{'}$ to
the output alphabet. We minimize over all conditional distributions on $U$
and functions $g$ such that the expected distortion for the joint distribution
is less than $D$. Here, we briefly explain the Wyner-Ziv achievability scheme: Fix $p(u|v)$ and the function $g (u,v^{'})$. Calculate $p(u) =\sum_{v} p(v)p(u|v)$.

\emph{Codebook Generation}: Let $R_{1} = I (V;U) + \epsilon$. Generate $2^{nR_{1}}$ i.i.d.
codewords $U^{n}(s) \sim \prod_{i=1}^{n} p(u_{i})$, and index them by $s\in\{1,2,...,2^{nR_{1}}\}$. Let $R_{2}=I(U;V)-I(U;V^{'})+5\epsilon$. Randomly assign the indices $s\in\{1,2,..., 2^{nR_{1}}\}$ to one of $2^{nR_{2}}$ bins using a uniform distribution over the bins. Let $B(i)$ denote the indices assigned to bin $i$. There are approximately $2^{n(R1-R2)}$ indices in each bin.

\emph{Encoding}: Given a source sequence $V^{n}$, the encoder looks for a codeword
$U^{n}(s)$ such that $(V^{n},U^{n}(s))\in A^{*(n)}_{\epsilon}$ . If there is no such $U^{n}$, the encoder sets $s = 1$. If there is more than one such $s$, the encoder uses the lowest $s$. The encoder sends the index of the bin in which $s$ belongs.

\emph{Decoding}: The decoder looks for a $U^{n}(s)$ such that $s \in B(i)$ and
$(U^{n}(s), V^{'n})\in A^{*(n)}_{\epsilon}$ . If it finds a unique $s$, it then calculates $\hat{V}^{n}$, where $\hat{V}_{i}= f (U_{i}, V^{'}_{i})$. If it does not find any such $s$ or more than one such $s$, it sets $\hat{V}^{n} = \hat{v}^{n}$, where $\hat{v}^{n}$ is an arbitrary sequence in $\hat{\mathcal{V}}^{n}$. It does not
matter which default sequence is used; it is shown that the probability
of this event is small.

\section{Separation Based Scheme With Wyner-Ziv Coding and the Pure Analog Scheme}

In this section, we first analyzed the separation based scheme in which the source is initially quantized and then transmitted using an optimum channel coding scheme. Next, we evaluate an uncoded analog scheme in which an optimum scaling version of the the source signal is transmitted.

\subsection{secure digital Wyner-Ziv coding scheme}
In this section we consider the problem described in Fig. 1. We use a separation scheme with an optimum Wyner-Ziv code followed by an optimum channel code. We refer to this scheme as the "secure digital Wyner-Ziv coding scheme". We show that this scheme achieves the optimum possible distortion.

If we suppose that the side information $V^{'}$ is also available at the encode, the transmitter therefore requires sending the remaining information $T$. In \cite{6} the rate distortion problem for Shannon cipher system is considered. It can be seen from Theorem 1 in \cite{6} by setting $R_{k}=0$, the Shannon cipher system reduces to the wiretap channel setup and the optimum distortion can be achieved by separate source coding followed by digital wiretap channel coding scheme. Therefore, for a fixed leaked information $I_{\epsilon}$ we first need to quantize the source $T^{n}$ to $T^{n}_{q}$ at a rate $C_{s}=\frac{1}{2}\log\left(1+\frac{P}{N_{1}}\right)-\frac{1}{2}\log\left(1+\frac{P}{N_{2}}\right)$ and then transmit $V^{n}_{q}$. Thus, we can achieve the optimum distortion of
\begin{IEEEeqnarray}{rl}
D^{*}&=\sigma_{t}^{2}2^{-2\left(C_{s}+I_{\epsilon}\right)}\\ \nonumber
&=\frac{\sigma^{2}_{t}}{(1+\frac{P}{N_{1}})/(1+\frac{P}{N_{2}})2^{2I_{\epsilon}}}.
\end{IEEEeqnarray}

In the problem of Fig.1., assume that a genie informs the transmitter about the side information $V^{'}$. Therefore, the above distortion is an upperbound for the problem of Fig.1, i.e., $D^{*}$ is the minimum possible value for distortion $D=E\left[\|V-\hat{V}\|^{2}\right]$. We show that the same distortion can be achieved in the setup of Fig.1 by using the secure digital Wyner-Ziv coding scheme. Fig.2. $a)$ shows the separation based scheme. The following theorem illustrates this result.
\begin{figure}
\leftline{\includegraphics[scale=.2]{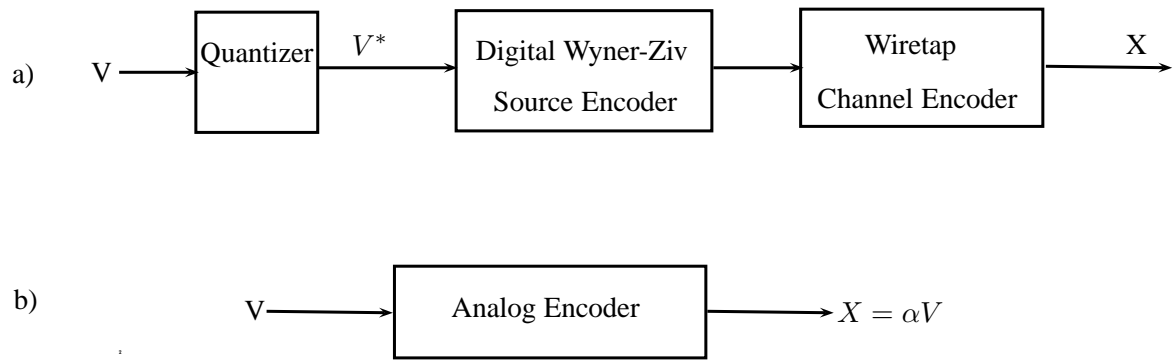}} \caption{a) Block Diagram of the Encoder of the Separation Based Scheme. b) Block Diagram of the Analog Encoder.}
\end{figure}

\begin{thm}
In the problem of secure transmitting a discrete-time Gaussian analog source over a Gaussian wiretap channel with side information known at the intended decoder, a separation based scheme can achieve the optimum distortion of $D^{*}$.
\end{thm}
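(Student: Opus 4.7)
The plan is to establish achievability by constructing an explicit separation-based scheme and verifying that it meets both the distortion target $D^*$ and the leakage constraint $I_\epsilon$. Optimality (a matching converse) is already in hand from the genie-aided argument given just before the theorem: giving $V'^n$ to the encoder cannot hurt, and in that enhanced problem reference [6] identifies $D^*$ as the minimum distortion. Hence it suffices to exhibit a scheme that attains $D^*$ without the genie.

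My construction is the natural cascade. Since $V = V' + T$ with $T \sim \mathcal{N}(0,\sigma_t^2)$ independent of $V'$, the conditional variance $\mathrm{Var}(V|V') = \sigma_t^2$, and the Gaussian Wyner-Ziv rate-distortion function reads
\begin{equation}
R_{V'}(D) = \tfrac{1}{2}\log\!\left(\tfrac{\sigma_t^2}{D}\right),\qquad 0 < D \le \sigma_t^2,
\end{equation}
achieved by the standard test channel $U = V + Q$ for Gaussian $Q$ together with the MMSE reconstruction $\hat V = g(U, V')$. First I would apply this Wyner-Ziv code to $V^n$ at the rate $R_{I_\epsilon} = C_s + I_\epsilon$ allowed by Lemma 1, producing a bin index $M \in \{1, \dots, 2^{nR_{I_\epsilon}}\}$; inverting the rate-distortion curve at this rate gives a source reconstruction distortion of $\sigma_t^2 \cdot 2^{-2R_{I_\epsilon}} = D^*$. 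I would then transmit $M$ across the Gaussian wiretap channel with the code supplied by Lemma 1, which delivers $M$ reliably to the intended receiver and bounds the per-letter leakage of $M$ to the eavesdropper by $I_\epsilon$. The intended decoder applies $g$ to the decoded index and its side information $V'^n$ to produce $\hat V^n$, asymptotically achieving distortion $D^*$.

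The step I expect to be the most delicate is the secrecy analysis on the composed system, since the definition of leakage in (\ref{Ie}) is in terms of the source $V^n$ rather than the message $M$. The key observation is that by construction the channel input $X^n$ is a function of the source only through $M$, yielding the Markov chain $V^n \to M \to X^n \to Z^n$. The data-processing inequality then gives $I(V^n;Z^n) \le I(M;Z^n)$, and Lemma 1 bounds the latter by $nI_\epsilon$, so the secrecy constraint is met. A minor additional care is needed to show that the Wyner-Ziv and wiretap decoding error events contribute negligibly to the overall mean-squared distortion; this follows by the standard boundedness-plus-vanishing-probability argument, since the source and reconstruction are second-moment bounded and both error probabilities vanish as $n \to \infty$. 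Matching the achieved distortion $D^*$ against the genie-aided lower bound completes the proof.
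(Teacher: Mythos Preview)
Your proposal is correct and follows essentially the same separation-based architecture as the paper: Wyner-Ziv encode $V^n$ with side information $V'^n$ at rate $R_{I_\epsilon}=C_s+I_\epsilon$, then protect the resulting index with a wiretap channel code, and match the achieved distortion $\sigma_t^2 2^{-2R_{I_\epsilon}}=D^*$ against the genie-aided lower bound. The only differences are cosmetic---the paper parametrizes the test channel as $U=\sqrt{\alpha}V+F$ and solves for $\alpha$ rather than invoking the closed-form $R_{V'}(D)=\tfrac12\log(\sigma_t^2/D)$---and your explicit data-processing argument $I(V^n;Z^n)\le I(M;Z^n)$ for the secrecy constraint is a point the paper's proof leaves implicit.
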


\begin{proof}
The achievability scheme is secure digital Wyner-Ziv coding scheme. Let $U$ be an auxiliary random variable given by
\begin{IEEEeqnarray}{rl}
U=\sqrt{\alpha}V+F,
\end{IEEEeqnarray}
where $F\sim \mathcal{N}(0,D)$. We generate an $n$-length i.i.d Gaussian codebook $\mathcal{U}$ with $2^{nI(U,V)}$ codewords, where each component of a codeword is Gaussian with zero mean and variance $\alpha\sigma^{2}_{v}+D$. We then evenly distribute them into $2^{nR_{V^{'}}(D)}$ bins, where $R_{V^{'}}(D)=I(U;V)-I(U;V^{'})$. Let $i(u^{n})$ be the index of the bin containing $u^{n}$. For each $v^{n}$, find a $u^{n}$ such that $(u^{n},v^{n})$ are strongly jointly typical, i.e., $(u^{n},v^{n})\in A^{*(n)}_{\epsilon}$. The index $i(u^{n})$ is the Wyner-Ziv source coding index. The transmitter then encoded the index $i(u^{n})$ using an optimal secure channel code of rate arbitrary close to $C_{s}$ and transmit it over the channel. The receiver decodes the index $i(u^{n})$ with high probability. Next for the decoded $i(u^{n})$ we look for an $u^{n}$ in the bin whose index is $i(u^{n})$ such that $(u^{n},v^{'n})\in A^{*(n)}_{\epsilon}$. We make an estimate for source $v^{n}$ from the decoded $u^{n}$ and $v^{'n}$ as follows:
\begin{IEEEeqnarray}{rl}
\hat{v}^{n}=\left(
              \begin{array}{cc}
                \lambda_{1} & \lambda_{2} \\
              \end{array}
            \right)
\left(
                                       \begin{array}{c}
                                         u^{n} \\
                                         v^{'n} \\
                                       \end{array}
                                     \right),
\end{IEEEeqnarray}
where we need to determine $\lambda_{1}$ and $\lambda_{2}$ such that the distortion $D=E\{|v^{n}-\hat{v}^{n}|^{2}\}$ is minimized. After some math, the optimum values for $\lambda_{1}$ and $\lambda_{2}$ are as follows:
\begin{IEEEeqnarray}{rl}
\lambda_{1}&=\sqrt{\alpha}\\ \nonumber
\lambda_{2}&=1-\alpha,
\end{IEEEeqnarray}
and the related distortion is given by
\begin{IEEEeqnarray}{rl}\label{e1}
D=(1-\alpha)\sigma^{2}_{t}.
\end{IEEEeqnarray}

To calculate the coefficient $\alpha$, note that
\begin{IEEEeqnarray}{rl}
R_{V^{'}}(D)&=I(U;V)-I(U;V^{'})\\ \nonumber
&=\frac{1}{2}\log\left(1+\frac{\alpha\sigma^{2}_{v}}{D}\right)-\frac{1}{2}\log\left(1+\frac{\alpha\sigma^{2}_{v^{'}}}{\alpha\sigma^{2}_{t}+D}\right)\\ \nonumber &=\frac{1}{2}\log\left(\frac{D+\alpha\sigma^{2}_{v}}{\alpha\sigma^{2}_{t}+\alpha\sigma^{2}_{v^{'}}+D}\frac{\alpha\sigma^{2}_{t}+D}{D}\right)\\ \nonumber &\stackrel{(a)}{=}\frac{1}{2}\log\left(\frac{\alpha\sigma^{2}_{t}+D}{D}\right),
\end{IEEEeqnarray}
where $(a)$ follows from the fact that $\sigma^{2}_{v}=\sigma^{2}_{v^{'}}+\sigma^{2}_{t}$.

According to the fact that  $R_{V^{'}}(D)$ must be equal to $C_{s}+I_{\epsilon}$ we have,

\begin{IEEEeqnarray}{rl}\label{e2}
\alpha=\frac{D}{\sigma^{2}_{t}}\left[2^{2I_{\epsilon}}\frac{P+N_{1}}{P+N_{2}}\frac{N_{2}}{N_{1}}-1\right]. \end{IEEEeqnarray}

From ($\ref{e1}$) and ($\ref{e2}$) the achieved distortion is given by
\begin{IEEEeqnarray}{rl}
D&=\frac{\sigma^{2}_{t}}{2^{2I_{\epsilon}}\frac{P+N_{1}}{P+N_{2}}\frac{N_{2}}{N_{1}}}\\ \nonumber
&=D^{*}
\end{IEEEeqnarray}
\end{proof}

\subsection{Uncoded Scheme}

In this section we analyze a simple uncoded scaling scheme $($Fig.2. $b))$ in which the transmitter signal is given by
\begin{IEEEeqnarray}{rl}
X=\alpha V,
\end{IEEEeqnarray}
where $\alpha\leq \sqrt{\frac{P}{\sigma^{2}_{v}}}$. The intended receiver's and the eavesdropper's signals are therefore given by
\begin{IEEEeqnarray}{rl}
Y&=\alpha V+ W,\\ \nonumber
Z&= \alpha V + W^{'}.
\end{IEEEeqnarray}
For a fixed leakage information, $I_{\epsilon}$ is given as follows:
\begin{IEEEeqnarray}{rl}
I_{\epsilon}=I(V;Z)=\frac{1}{2}\log\left(1+\frac{\alpha^{2}\sigma^{2}_{v}}{N_{2}}\right).
\end{IEEEeqnarray}
From the above equation we can see that the choice of $\alpha=\sqrt{\frac{P}{\sigma^{2}_{v}}}$ results in a reasonably high information leakage rate. We can, however, reduce the value of $\alpha$ to satisfy our fixed information leakage to the eavesdropper.  Therefore, $\alpha$ is given by
\begin{IEEEeqnarray}{rl}\label{alfa}
\alpha=\sqrt{\frac{N_{2}(2^{2I_{\epsilon}}-1)}{\sigma^{2}_{v}}}.
\end{IEEEeqnarray}
The intended receiver makes an estimation of $V$ based on the received signal $Y$ and the side information $V^{'}$. The estimated signal is therefore as follows:
\begin{IEEEeqnarray}{rl}
\hat{V}&=\lambda_{1}Y+\lambda_{2}V^{'}\\ \nonumber
&= \alpha \lambda_{1} V+ \lambda_{1}W +\lambda_{2}V^{'}.
\end{IEEEeqnarray}
The corresponding distortion $D_{u}$ is given by
\begin{IEEEeqnarray}{rl}\label{dist}
D_{u}&=E\left[\left|\hat{V}-V\right|^{2}\right]\\ \nonumber
&=E\left[\left(\alpha \lambda_{1}-1\right)V+\lambda_{1}W + \lambda_{2}V^{'}\right]\\ \nonumber
&=\left(\alpha\lambda_{1}-1\right)^{2}\sigma^{2}_{v}+\lambda_{1}^{2}N_{1}+\lambda_{2}^{2}\sigma^{2}_{v^{'}}+2\lambda_{2}\left(\alpha\lambda_{1}-1\right)\sigma^{2}_{v^{'}}.
\end{IEEEeqnarray}
After some algebra, the optimum values for $\lambda_{1}$ and $\lambda_{2}$ which minimize distortion $D_{u}$ are are given as follows:
\begin{IEEEeqnarray}{rl}\label{lamdas}
\lambda_{1}&=\frac{\alpha\sigma_{t}^{2}}{N_{1}+\alpha^{2}\sigma_{t}^{2}}\\ \nonumber
\lambda_{2}&= \frac{N_{1}}{N_{1}+\alpha^{2}\sigma_{t}^{2}}.
\end{IEEEeqnarray}
By substituting (\ref{alfa}) and (\ref{lamdas}) into (\ref{dist}), the minimum distortion in this uncoded scheme is given by
\begin{IEEEeqnarray}{rl}\label{uncd}
D_{u}=\frac{\sigma^{2}_{t}}{1+\frac{\sigma_{t}^{2}}{\sigma_{v}^{2}}\frac{N_{2}}{N_{1}}\left(2^{2I_{\epsilon}}-1\right)}.
\end{IEEEeqnarray}

\section{Secure Joint Source-Channel Coding Schemes with Side Information at the Receiver}

In this section, we propose two different secure joint source-channel coding schemes which achieve the minimum distortion of $D^{*}$. The first proposed scheme does not involve quantizing the source explicitly while the second proposed scheme is a superimposed digital and the first hybrid scheme.

\subsection{Secure Hybrid Digital-Analog Wyner-Ziv Coding Scheme}

In this scheme, the encoder does not quantize the source explicitly, however it treats the source signal as a digital message. We generate the auxiliary random variable $U$ as follows:
\begin{IEEEeqnarray}{rl}
U=X+kV,
\end{IEEEeqnarray}
where $k$ is defined as $k^{2}=\frac{1}{\sigma^{2}_{t}}\left[\frac{PN_{2}}{P+N_{2}}2^{2I_{\epsilon}}-\frac{PN_{1}}{P+N_{1}}\right]$ and $X\sim\mathcal{N}(0,P)$.

\emph{Codebook Generation:} We generate a random i.i.d codebook $\mathcal{U}$ with $2^{nI(U;V)}$ sequences, where the components of the codewords are zero mean Gaussian random variables with variance $P+k^{2}\sigma_{v}^{2}$. We then distribute the generated codewords into $2^{nR}$ bins. This codebook is shared between the encoder and the intended receiver's decoder.

\emph{Encoding:} For a given $v^{n}$ the transmitter finds  $u^{n}$'s such that $(u^{n},v^{n})$ are strongly jointly typical, i.e., $(u^{n},v^{n})\in A^{*(n)}_{\epsilon}$. The transmitter randomly chooses one of $u^{n}$ and then sends $x^{n}=u^{n}-kv^{n}$. This is possible with arbitrary high probability if $R>I(U;V)-I(U;V^{'})$

\emph{Decoding:} The intended receiver's signal is $y^{n}=x^{n}+w^{n}$. The intended decoder finds a $u^{n}$ such that $(v^{'n},y^{n},u^{n})\in A^{*(n)}_{\epsilon}$. A unique such $u^{n}$ can be found with high probability if $R<I(U;V^{'},Y)-I(U;Z)$. We next show that we can choose $R$ to satisfy $I(U;V)-I(U;V^{'})<R<I(U;V^{'},Y)-I(U;Z)$. Equivalently, we show that with $k^{2}=\frac{P^{2}(N_{2}-N_{1})}{\sigma^{2}_{t}(P+N_{1})(P+N_{2})}$, we have $I(U;V)-I(U;V^{'})<I(U;V^{'},Y)-I(U;Z)$. Note that after some manipulation
\begin{IEEEeqnarray}{rl}\nonumber
I(U;V)-I(U;V^{'})&=\frac{1}{2}\log\left(\frac{P+k^{2}\sigma_{v}^{2}}{P}\right)-\frac{1}{2}\log\left(\frac{P+k^{2}\sigma_{v}^{2}}{P+k^{2}\sigma_{t}^{2}}\right)\\ \nonumber &=\frac{1}{2}\log\left(\frac{P+k^{2}\sigma_{t}^{2}}{P}\right),
\end{IEEEeqnarray}
and
\begin{IEEEeqnarray}{rl}\nonumber
I(U;V^{'},Y)-I(U;Z)=\frac{1}{2}\log\left(1+P\frac{\sigma_{v}^{2}}{\sigma_{t}^{2}}\frac{N_{2}-N_{1}}{N_{2}\left(P+N_{1}\right)}\right).
\end{IEEEeqnarray}
It is easy to see that for $k^{2}=\frac{1}{\sigma^{2}_{t}}\left[\frac{PN_{2}}{P+N_{2}}2^{2I_{\epsilon}}-\frac{PN_{1}}{P+N_{1}}\right]$, always we have $I(U;V)-I(U;V^{'})<I(U;V^{'},Y)-I(U;Z)$. The intended decoder therefore can estimate the transmitted signal $v^{n}$ from $u^{n}$, $v^{'n}$, and $y^{n}$ as follows:
\begin{IEEEeqnarray}{rl}
\hat{v}=\left(
              \begin{array}{ccc}
                \lambda_{1} & \lambda_{2} & \lambda_{3} \\
              \end{array}
            \right)\left(
                     \begin{array}{c}
                       u^{n} \\
                      v^{'n} \\
                       y^{n} \\
                     \end{array}
                   \right),
\end{IEEEeqnarray}
where $\left(
              \begin{array}{ccc}
                \lambda_{1} & \lambda_{2} & \lambda_{3} \\
              \end{array}
            \right)$ are the coefficients of the linear MMSE estimate which minimize $D=E\left[\|V-\hat{V}\|^{2}\right]$. After some math, the optimum choices for the coefficients the the related distortion given as
\begin{IEEEeqnarray}{rl}
\lambda_{1}&=\frac{k\sigma_{t}^{2}}{k^{2}\sigma_{t}^{2}+\frac{PN_{1}}{P+N_{1}}}\\ \nonumber
\lambda_{2}&=\frac{\frac{PN_{1}}{P+N_{1}}}{k^{2}\sigma_{t}^{2}+\frac{PN_{1}}{P+N_{1}}}\\ \nonumber
\lambda_{3}&=\frac{-Pk\sigma_{t}^{2}}{k^{2}\sigma_{t}^{2}\left(P+N_{1}\right)+PN_{1}}
\end{IEEEeqnarray}

\begin{IEEEeqnarray}{rl}
D&=\frac{\sigma_{t}^{2}}{\frac{P+N_{1}}{P+N_{2}}\frac{N_{2}}{N_{1}}2^{2I_{\epsilon}}}\\ \nonumber &= D^{*}
\end{IEEEeqnarray}

\subsection{Superimposed Secure Digital and Secure Hybrid Wyner-Ziv Coding Scheme}

In this scheme, the transmitted signal is a superposition of two signals $X_{1}$ and $X_{2}$, which are the outputs of a digital Wyner-Ziv encoder and a hybrid digital-analog encoder, respectively. Fig.3. illustrates the streams of the transmitter for this scheme.
\begin{figure}
\leftline{\includegraphics[scale=.2]{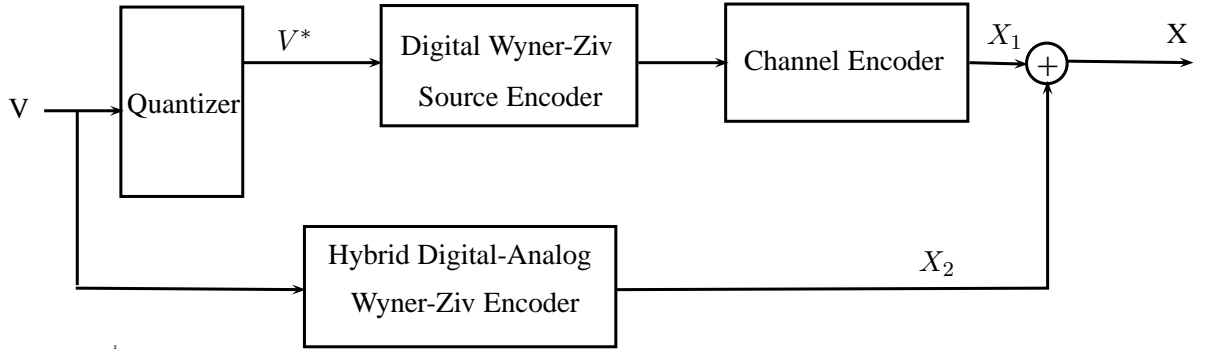}} \caption{Block Diagram of the Encoder of the Superimposed Digital and Hybrid Digital-Analog Wyner-Ziv Scheme}
\end{figure}

The first stream uses a digital rate $R$ Wyner-Ziv code to quantize the source signal assuming that the side information $V^{'}$ is known at the receiver. The discrete quantized index is then encoded using an optimal wiretap channel code to produce the codeword $X_{1}$. We allocate the following power to this stream
\begin{IEEEeqnarray}{rl}
P_{WZ}=\frac{\left(P+N_{1}\right)\left(P+N_{2}\right)\left(1-2^{-2R}\right)}{P+N_{2}-\left(P+N_{1}\right)2^{-2R}}.
\end{IEEEeqnarray}

The second stream uses the previous proposed hybrid digital-analog Wyner-Ziv scheme and produces the output signal $X_{2}$. The auxiliary random variable of this scheme is given by
\begin{IEEEeqnarray}{rl}\label{newu}
U=k_{1}V+X_{2},
\end{IEEEeqnarray}
where $X_{2}$ and $V$ are independent and $X_{2}\sim\mathcal{N}(0,P_{HWZ})$, where the allocated power to this stream is given by
\begin{IEEEeqnarray}{rl} \label{phwz}
P_{HWZ}=\frac{N_{2}\left(P+N_{1}\right)2^{-2R}-N_{1}\left(P+N_{2}\right)}{P+N_{2}-\left(P+N_{1}\right)2^{-2R}}.
\end{IEEEeqnarray}
We also chose $k_{1}^{2}$ as follows
\begin{IEEEeqnarray}{rl}\label{newk}
k_{1}^{2}=\frac{1}{\sigma^{2}_{t}}\left[\frac{P_{HWZ}N_{2}}{P_{HWZ}+N_{2}}2^{2I_{\epsilon}}-\frac{P_{HWZ}N_{1}}{P_{HWZ}+N_{1}}\right],
\end{IEEEeqnarray}
where $\sigma_{\widetilde{t}}^{2}=\sigma_{t}^{2}2^{-2R}$.

The two streams of $X_{1}$ and $X_{2}$ are superimposed at the transmitter through the channel. The received signal at the intended receiver is is given by $Y=X_{1}+X_{2}+W$. The intended decoder first decodes $X_{1}$ by assuming that $X_{2}+W$ is an independent noise and extracts the Wyner-Ziv encoded bits (index). Next, along with the side information $V^{'}$, the intended decoder maked an estimate of the source. Let us denote this estimation by $\widetilde{V}$. The source signal and the estimated signal of $\widetilde{V}$ are related as follows:
\begin{IEEEeqnarray}{rl}\label{newside}
V=\widetilde{V}+\widetilde{T},
\end{IEEEeqnarray}
where $\sigma^{2}_{\widetilde{t}}$ can be calculated as follows. Let us denote the quantized signal by $V^{*}$. Thus,
\begin{IEEEeqnarray}{rl}
V=V^{*}+ E,
\end{IEEEeqnarray}
where $E$ is the quantization error and $E\sim\mathcal{N}(0,\sigma^{2}_{v}2^{-2R})$ is independent of $V^{*}$. The intended receiver knows the signals $V^{*}$ and $V^{'}$, where
\begin{IEEEeqnarray}{rl}
V=V^{'}+T.
\end{IEEEeqnarray}
Let $\mbox{\boldmath$\Lambda$}$ be the covariance matrix of $(V^{*},V^{'})^{T}$ and $\mbox{\boldmath$\Gamma$}$ be the correlation between $V$ and $(V^{*},V^{'})^{T}$. Then $\mbox{\boldmath$\Lambda$}$ and $\mbox{\boldmath$\Gamma$}$ are given by
\begin{IEEEeqnarray}{rl}
\mbox{\boldmath$\Lambda$}=\left(
                            \begin{array}{cc}
                              \sigma^{2}_{v^{*}} & \frac{\sigma^{2}_{v^{*}}\sigma^{2}_{v^{'}}}{\sigma^{2}_{v}} \\
                              \frac{\sigma^{2}_{v^{*}}\sigma^{2}_{v^{'}}}{\sigma^{2}_{v}}&  \sigma^{2}_{v^{'}} \\
                            \end{array}
                          \right)~~\hbox{and}~~
                       \mbox{\boldmath$\Gamma$}=\left( \sigma^{2}_{v^{*}}~~~ \sigma^{2}_{v^{'}}\right)
\end{IEEEeqnarray}
Therefore,
\begin{IEEEeqnarray}{rl}\label{ttilde}
\sigma^{2}_{\widetilde{t}}&=\sigma^{2}_{v}-\mbox{\boldmath$\Gamma$}^{T}\mbox{\boldmath$\Lambda$}^{-1}\mbox{\boldmath$\Gamma$}\\ \nonumber &=\sigma^{2}_{t}2^{-2R}.
\end{IEEEeqnarray}

When the digital part is first decoded and canceled from the intended receive signal, we have an equivalent channel for the hybrid digital-analog Wyner-Ziv scheme with power constraint of $P_{HWZ}$. The intended receiver finally makes an estimate of $V$ using hybrid digital-analog Wyner-Ziv decoder from the new side information $\widetilde{V}$. The observe equivalent channel is $Y-X_{1}$. According to the previous subsection, the achievable distortion is given as follows:
\begin{IEEEeqnarray}{rl}
D&=\frac{\sigma^{2}_{\widetilde{t}}}{\frac{P_{HWZ}+N_{1}}{P_{HWZ}+N_{2}}\frac{N_{2}}{N_{1}}2^{2I_{\epsilon}}}\\ \nonumber
&\stackrel{(a)}{=}\frac{\sigma^{2}_{t}2^{-2R}}{\frac{(P+N_{1})2^{-2R}}{P+N_{2}}\frac{N_{2}}{N_{1}}2^{2I_{\epsilon}}}\\ \nonumber
&=\frac{\sigma^{2}_{t}}{\frac{P+N_{1}}{P+N_{2}}\frac{N_{2}}{N_{1}}2^{2I_{\epsilon}}}\\ \nonumber
&=D^{*},
\end{IEEEeqnarray}
where $(a)$ follows from (\ref{phwz}). The optimal distortion $D^{*}$ can be achieved for any rate $R$ which
$0\leq R\leq C$, where $C=\frac{1}{2}\log\left(\frac{P+N_{1}}{P+N_{2}}\frac{N_{2}}{N_{1}}\right)$ is the capacity of the wiretap channel. Therefore, there are infinitely many schemes which achieve the optimal distortion. In the special case that $P_{HWZ}=P$ (and therefore $R=0$) this scheme converts to the scheme of pervious subsection. Similarly, when
$P_{HWZ}=0$ (or equivalently $R=C$), this scheme is equivalent to the separated based scheme.

\section{SNR Mismatch Analysis}

In this section, we evaluate the performance of our propose secure hybrid digital-analog Wyner-Ziv schemes for the case of $SNR$ mismatch where we design the scheme to be optimal for a designed $SNR_{1}$ such that $SNR_{2}<SNR_{1}<SNR_{1a}$, but the actual $SNR$ is $SNR_{1a}$. It is well known that separation based scheme suffers from a pronounced threshold effect; when the actual $SNR$ is worse than the designed $SNR$, the index cannot be decoded and when the actual $SNR$ is better than the designed $SNR$, the distortion is limited by quantization and therefore, the distortion does not improve. We show that however our proposed secure joint source-channel coding schemes offer better performance in this situation.

Let us first consider our first proposed secure joint source-channel coding scheme. The intended receiver can decode $u^{n}$ when the $SNR_{1a}$ is better than the designed $SNR_{1}$ and make an estimate of the source from the observations at the receiver. The signals are given as follows:

\begin{IEEEeqnarray}{rl}
U&=X+kV\\ \nonumber
V&=V^{'}+T\\ \nonumber
Y&=X+W_{a}\\ \nonumber
Z&=X+W^{'},
\end{IEEEeqnarray}
where  $k=\sqrt{\frac{1}{\sigma^{2}_{t}}\left[\frac{PN_{2}}{P+N_{2}}2^{2I_{\epsilon}}-\frac{PN_{1}}{P+N_{1}}\right]}$, $W\sim\mathcal{N}(0,N_{1a})$. The intended receiver uses an optimal linear MMSE to estimate the transmitted signal $V$ from the observations of $[U,V^{'},Y]$. Note that this receiver knows the exact value of $N_{1a}$, but the transmitter chooses the parameter $k$ based on the designed $N_{1}$. Let $\mbox{\boldmath$\Lambda$}$ be the covariance matrix of $(U,V^{'},Y)^{T}$ and $\mbox{\boldmath$\Gamma$}$ be the correlation between $V$ and $(U,V^{'},Y)^{T}$. Thus,

\begin{IEEEeqnarray}{rl}
\mbox{\boldmath$\Lambda$}=\left(
                            \begin{array}{ccc}
                              P+k^{2}\sigma_{v}^{2} & k(\sigma_{v}^{2}-\sigma_{t}^{2}) & P\\
                              k(\sigma_{v}^{2}-\sigma_{t}^{2}) & \sigma_{v}^{2}-\sigma_{t}^{2} & 0 \\
                              P & 0 & P+N_{1a} \\
                            \end{array}
                          \right)
\end{IEEEeqnarray}
and
\begin{IEEEeqnarray}{rl}
\mbox{\boldmath$\Gamma$}=\left(
                           \begin{array}{ccc}
                            k\sigma_{v}^{2}& \sigma_{v}^{2}-\sigma_{t}^{2} & 0 \\
                           \end{array}
                         \right)^{T}.
\end{IEEEeqnarray}
The coefficients of the linear MMSE estimate are given by $\mbox{\boldmath$\Lambda$}^{-1}\mbox{\boldmath$\Gamma$}$. After some math, the actual distortion is then given by
\begin{IEEEeqnarray}{rl}
D_{a}&=\sigma_{v}^{2}-\mbox{\boldmath$\Gamma$}^{T}\mbox{\boldmath$\Lambda$}^{-1}\mbox{\boldmath$\Gamma$}\\ \nonumber &=\sigma_{v}^{2}-\frac{k^{2}\sigma_{v}^{2}\sigma_{t}^{2}\left(P+N_{1a}\right)+\left(\sigma_{v}^{2}-\sigma_{t}^{2}\right)PN_{1a}}{k^{2}\sigma_{t}^{2}\left(P+N_{1a}\right)+PN_{1a}}\\ \nonumber &=\frac{\sigma_{t}^{2}PN_{1a}}{k^{2}\sigma_{t}^{2}\left(P+N_{1a}\right)+PN_{1a}}\\ \nonumber &\stackrel{(a)}{=}\frac{\sigma_{t}^{2}N_{1a}}{\left[\frac{N_{2}}{P+N_{2}}2^{2I_{\epsilon}}-\frac{N_{1}}{P+N_{1}}\right]\left[P+N_{1a}\right]+N_{1a}},
\end{IEEEeqnarray}
where $(a)$ follows by substituting $k$. As we can see, this scheme has the distortion exponent $\zeta=1$.

Now let us consider the superimposed secure digital and secure hybrid Wyner-Ziv coding scheme. The intended receiver can decode the digital part ($X_{1}$) when the actual $SNR_{1a}$ is better than the design $SNR_{1}$. When the digital part is decoded and canceled from the intended receiver signal of $Y=X_{1}+X_{2}+W_{a}$, we have an equivalent hybrid digital-analog Wyner-Ziv coding scheme with the side information of $\widetilde{V}$ (see equation (\ref{newside})) available at the intended receiver and the power constraint of $P_{HWZ}$ (see equation (\ref{phwz})). The achievable distortion at the intended receiver is therefore given by

\begin{IEEEeqnarray}{rl}
D_{a}=\sigma_{v}^{2}-\mbox{\boldmath$\Gamma$}^{T}\mbox{\boldmath$\Lambda$}^{-1}\mbox{\boldmath$\Gamma$},
\end{IEEEeqnarray}
where here $\mbox{\boldmath$\Lambda$}$ is the covariance matrix of $(U,\widetilde{V},Y-X_{1})^{T}$ and $\mbox{\boldmath$\Gamma$}$ is the correlation between $V$ and $(U,\widetilde{V},Y-X_{1})^{T}$, where $U$ and $k_{1}$ are given in (\ref{newu}) and (\ref{newk}), respectively. Thus,
\begin{IEEEeqnarray}{rl}
\mbox{\boldmath$\Lambda$}=\left(
                            \begin{array}{ccc}
                              P_{HWZ}+k_{1}^{2}\sigma_{v}^{2} & k_{1}(\sigma_{v}^{2}-\sigma_{\widetilde{t}}^{2}) & P_{HWZ}\\
                              k_{1}(\sigma_{v}^{2}-\sigma_{\widetilde{t}}^{2}) & \sigma_{v}^{2}-\sigma_{\widetilde{t}}^{2} & 0 \\
                              P_{HWZ} & 0 & P_{HWZ}+N_{1a} \\
                            \end{array}
                          \right)
\end{IEEEeqnarray}
and
\begin{IEEEeqnarray}{rl}
\mbox{\boldmath$\Gamma$}=\left(
                           \begin{array}{ccc}
                            k_{1}\sigma_{v}^{2}& \sigma_{v}^{2}-\sigma_{\widetilde{t}}^{2} & 0 \\
                           \end{array}
                         \right)^{T}.
\end{IEEEeqnarray}
The coefficients of the linear MMSE estimate are given by $\mbox{\boldmath$\Lambda$}^{-1}\mbox{\boldmath$\Gamma$}$. After some math, the actual distortion is then given by
\begin{IEEEeqnarray}{rl}
D_{a}&=\sigma_{v}^{2}-\mbox{\boldmath$\Gamma$}^{T}\mbox{\boldmath$\Lambda$}^{-1}\mbox{\boldmath$\Gamma$}\\ \nonumber &=\sigma_{v}^{2}-\frac{k_{1}^{2}\sigma_{v}^{2}\sigma_{\widetilde{t}}^{2}\left(P_{HWZ}+N_{1a}\right)+\left(\sigma_{v}^{2}-\sigma_{\widetilde{t}}^{2}\right)P_{HWZ}N_{1a}}{k_{1}^{2}\sigma_{\widetilde{t}}^{2}\left(P_{HWZ}+N_{1a}\right)+P_{HWZ}N_{1a}}\\ \nonumber &=\frac{\sigma_{\widetilde{t}}^{2}P_{HWZ}N_{1a}}{k_{1}^{2}\sigma_{\widetilde{t}}^{2}\left(P_{HWZ}+N_{1a}\right)+P_{HWZ}N_{1a}}\\ \nonumber &\stackrel{(a)}{=}\frac{\sigma_{\widetilde{t}}^{2}N_{1a}}{\left[\frac{N_{2}}{P_{HWZ}+N_{2}}2^{2I_{\epsilon}}-\frac{N_{1}}{P_{HWZ}+N_{1}}\right]\left[P_{HWZ}+N_{1a}\right]+N_{1a}}\\ \nonumber &\stackrel{(b)}{=}\frac{\sigma_{t}^{2}2^{-2R}}{\left[\frac{N_{2}}{N_{1a}\left(P_{HWZ}+N_{2}\right)}2^{2I_{\epsilon}}-\frac{N_{1}}{N_{1a}\left(P_{HWZ}+N_{1}\right)}\right]\left[P_{HWZ}+N_{1a}\right]+1}\\ \nonumber &\stackrel{(c)}{=}\frac{\sigma_{t}^{2}2^{-2R}}{\frac{N_{2}2^{2I_{\epsilon}}\left(N_{2}-N_{1a}\right)\left(P+N_{1}\right)2^{-2R}+\left(P+N_{2}\right)\left(N_{1a}-N_{1}\right)}{N_{1a}\left(N_{2}-N_{1}\right)\left(P+N_{2}\right)}-\frac{N_{1}\left(N_{2}-N_{1a}\right)\left(P+N_{1}\right)2^{-2R}+\left(P+N_{2}\right)\left(N_{1a}-N_{1}\right)}{N_{1a}\left(N_{2}-N_{1}\right)\left(P+N_{1}\right)2^{-2R}}+1} \end{IEEEeqnarray}
where $(a)$ follows by substituting $k_{1}$, $(b)$ follows by substituting $\sigma_{\widetilde{t}}^{2}=\sigma_{t}^{2}2^{-2R}$, and $(c)$ follows by substituting $P_{HWZ}$ from (\ref{phwz}). As we can see, this scheme has the distortion exponent $\zeta=1$.

It is useful to mention that for the uncoded scheme of section III-B, the distortion is given by (\ref{uncd}) which may be written as follows:
\begin{IEEEeqnarray}{rl}
D_{u}=\frac{\sigma^{2}_{t}}{1+\frac{\sigma_{t}^{2}}{\sigma_{v}^{2}}\frac{N_{2}}{P}\left(2^{2I_{\epsilon}}-1\right)SNR_{1a}}.
\end{IEEEeqnarray}
Though the above equation shows that the distortion exponent $\zeta$ is 1, we have a considerable loss in optimality at the intended receiver, as we have not used the full power $P$ at the transmitter.

\begin{figure}
\centerline{\includegraphics[scale=.7]{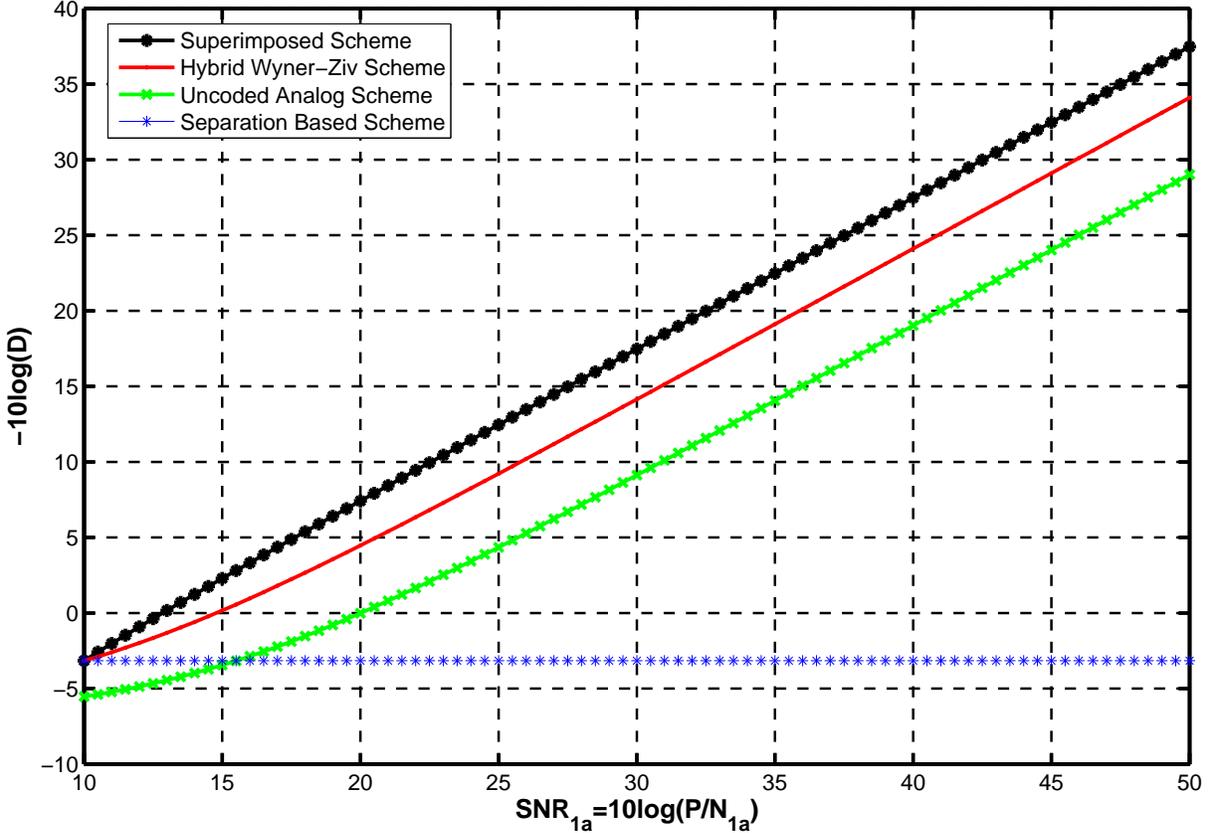}} \caption{Performance of the secure hybrid digital-analog Wyner-Ziv coding scheme compared with the separation based scheme.}
\end{figure}

Fig. 4 Compares the performance of the secure hybrid coding scheme, superimposed scheme and the analog coding scheme with the separation based scheme. In this figure, $P=10$, $SNR_{1}=10$, $SNR_{2}=7$, $10\leq SNR_{1a}\leq 50$, $I_{\epsilon}=0.2$, $R=0.15$, $\sigma^{2}_{v}=8$, and $\sigma^{2}_{t}=5$. As shown in this figure, our proposed hybrid schemes provide the optimum distortion when the transmitter has the exact value of the $SNR_{1a}$ and is more robust against the $SNR$ mismatch compared with digital Wyner-Ziv Coding scheme.

\section{Conclusions}

We considered the problem of transmitting an i.i.d Gaussian source over an i.i.d Gaussian wiretap channel with side information available at the intended receiver. We showed that Shannon's source-channel separation coding scheme is optimum in the sense of achieving the minimum distortion. We then proposed two hybrid digital-analog secure joint source-channel coding schemes which achieve the minimum distortion. Our coding schemes were based on the Wyner Ziv coding scheme and wiretap channel coding scheme when the analog source is not explicitly quantized. We analyzed our secure hybrid digital-analog schemes under the main channel SNR mismatch and showed that that our proposed schemes can give a graceful degradation of distortion with SNR under SNR mismatch, i.e., when the actual SNR is larger than the designed SNR.

\end{document}